\DeclarePairedDelimiter{\ceil}{\lceil}{\rceil}
\DeclarePairedDelimiter\floor{\lfloor}{\rfloor}
\newtheorem{Theorem}{Theorem}
\newtheorem{Proposition}{Proposition}
\newtheorem{Corollary}{Corollary}
\newtheorem{Example}{Example}
\newtheorem{Remark}{Remark}
\newtheorem{Definition}{Definition}
\DeclareMathOperator*{\argmax}{argmax}
\begin{document}
\author{Shahram Shahsavari, Farhad Shirani  and Elza Erkip\\
Dept. of Electrical and Computer Engineering \\
New York University, NY. \\\date{} }

\title{On the Fundamental Limits of Multi-user Scheduling under Short-term Fairness Constraints}


\maketitle
\begin{abstract}
\let\thefootnote\relax\footnotetext{This work is supported by NYU WIRELESS Industrial Affiliates and National Science Foundation grants EARS-1547332 and NeTS-1527750.}
In the conventional information theoretic analysis of multiterminal communication scenarios, it is often assumed that all of the distributed terminals use the communication channel simultaneously. However, in practical wireless communication systems --- due to restricted computation complexity at network terminals --- a limited number of users can be activated either in uplink or downlink simultaneously. This necessitates the design of a scheduler which determines the set of active users at each time-slot. A well-designed scheduler maximizes the average system utility subject to a set of fairness criteria, which must be met in a limited window-length to avoid long starvation periods.
In this work, scheduling under short-term temporal fairness constraints is considered. The objective is to maximize the average system utility such that the fraction of the time-slots that each user is activated is within desired upper and lower bounds in the fairness window-length. 
The set of feasible window-lengths is characterized as a function of system parameters. It is shown that the optimal system utility is non-monotonic and super-additive in window-length. Furthermore, a scheduling strategy is proposed which satisfies short-term fairness constraints for arbitrary window-lengths, and achieves optimal average system utility as the window-length is increased asymptotically. Numerical simulations are provided  to verify the  results.
\end{abstract}

\section{Introduction}
The design of efficient resource allocation and scheduling schemes in cellular communications is a topic of significant interest \cite{asadi2013survey, issariyakul2004throughput,kulkarni2003opportunistic,rosenberg-short-term,liu-jsac,lu1999fair,arxiv}. Opportunistic scheduling strategies maximize system utility by exploiting the channel state information while guaranteeing satisfaction of fairness constraints --- such as temporal fairness --- for individual users in the network.
In practice, these fairness constraints often need to be satisfied over limited window-lengths to avoid long starving periods and control system latency \cite{rosenberg-short-term,lu1999fair}.

In cellular networks, each cell operates in the uplink or downlink mode of operation at any given time. In uplink communication, a subset of users transmit their messages to the base station (BS). This can be modeled as communication over a multiple-access channel (MAC). In downlink communication, the BS transmits messages to a subset of users. This can be modeled as a broadcast channel (BC) communication problem. While the traditional information theoretic approach considers optimal communication rates for a \textit{given set of users}, scheduling strategies determine \textit{which users} are activated by the BS at each transmission block. The choice is made in accordance with the user's fairness demands (e.g. temporal demands) as well as the resulting system utility (e.g. sum-rate). Once the choice of active users is taken, physical layer techniques developed for MAC and BC are used for communication over the resulting channel \cite{ahlswede1973multi,yu2004sum}. 

In this work, we study the design of optimal scheduling strategies under short-term temporal fairness constraints. Temporal fairness requires
the fraction of the resource blocks that each user is activated to be within desired upper and lower bounds. 
 Temporally fair schedulers provide each user with a minimum temporal share in order to control the average delay \cite{issariyakul2004throughput}. 
Additionally, the maximum power drain of users can be restricted by limiting their activation time through placing upper-bounds on their temporal shares \cite{kulkarni2003opportunistic}. There has been a significant body of work dedicated to the study of temporally fair single-user \cite{liu-infocom,shahram-letter,shahsavari2018joint} and multi-user \cite{arxiv} schedulers under long-term fairness constraints. 

Short-term temporal fairness, where the fairness criteria are required to be satisfied over  a limited scheduling window, has been considered in several prior works for single-user schedulers \cite{rosenberg-short-term,liu-jsac,lu1999fair}. 
Short-term temporal fairness avoids long waiting times for users with low average channel quality. 
The use of timers in the networking protocols (e.g. TCP) gives further merit to considering short-term temporal fairness as the timer may expire and cause a loss of connectivity absent short-term temporal fairness guarantees \cite{rosenberg-short-term}.  
In \cite{rosenberg-short-term}, it is argued that long-term scheduling policies do not provide fairness guarantees in scheduling windows with practical length. 

In this work, we provide a general formulation of multi-user scheduling under short-term temporal fairness constraints. We characterize the set of feasible window-lengths and user temporal shares as a function of system parameters.
We show that in general, the optimal utility function is non-monotonic and super-additive as a function of the window-length. We build upon our prior works on the design of threshold based strategies under long-term fairness constraints \cite{arxiv,arxiv_Allerton} to propose an implementable scheduler which guarantees short-term fairness. Concentration of measure tools such as typicality are used to analyze the resulting utility of the proposed strategy. We show that as the window-length increases, the utility approaches that of the optimal one under long-term temporal constraints.  Furthermore, we investigate bounds on the optimal system utility as a function of the window-length and derive rates of convergence to optimal long-term utility. 

The rest of the paper is organized as follows: Section \ref{Sec:Not} explains the notation used in the rest of the paper. Section \ref{Sec:Form} provides the problem formulation. Section \ref{Sec:Ex} investigates the monotonicity of the optimal system utility as a function of window-length. Section \ref{Sec:Scheme} includes the proposed scheduling strategy. Section \ref{Sec:Sim} provides various simulations of practical scenarios to verify the results. Section \ref{Sec:Conc} concludes the paper.  

\section{Notation}
\label{Sec:Not}
The set of natural numbers, rational numbers and the real numbers are shown by $\mathbb{N}$, $\mathbb{Q}$ and $\mathbb{R}$ respectively.  The set of numbers $\{1,2,\cdots, n\}, n\in \mathbb{N}$ is represented by $[n]$. For the numbers $n_1,n_2,\cdots, n_{k}\in \mathbb{N}$, the least common multiple is written as $lcm(n_1,n_2,\cdots, n_{k})$. We write $a|b$ if the integer $a$ is a divisor of $b$.
The vector $(x_1,x_2,\cdots, x_n)$ is written as $x^n$. The $m\times t$ matrix $[g_{i,j}]_{i\in [m], j\in [t]}$ is denoted by $g^{m\times t}$.
For a random variable $X$, the corresponding probability space is $(\mathcal{X}, \mathbf{F}_{X}, P_X)$, where $\mathbf{F}_X$ is the underlying $\sigma$-field. 
For an event $\mathcal{A}$, the random variable $\mathbbm{1}_{\mathcal{A}}$ is the indicator function. 
Families of sets are shown using sans-serif letters such as $\mathsf{X}$.   The notation $\floor{x}$ ($\ceil{x}$) is used to represent the closest integer smaller (bigger) than $x$.

\section{Problem Formulation}
\label{Sec:Form}

Opportunistic multi-user scheduling under long-term temporal fairness constraints was formulated in \cite{arxiv}. We provide a summary and formulate multi-user scheduling under short-term fairness constraints. We consider a cell consisting of $n$ users and one BS. Only specific subsets of users may be activated simultaneously. For instance, in practical wireless communication systems --- due to restricted computation complexity at network terminals --- a limited number of users can be activated either in uplink or downlink at each time-slot. Subsets of users which can be activated simultaneously are called \textit{virtual users}. The set of virtual users is denoted by $\mathsf{V}=\{\mathcal{V}_1,\mathcal{V}_2,\cdots,\mathcal{V}_m\}$ where $\mathcal{V}_j, j\in [m]$ are subsets of users and $m\leq 2^n$. The choice of the active virtual user at a given time-slot determines the resulting system utility at that time-slot. The vector of system utilities due to activating each of the virtual users is called the \textit{performance vector}. As an example, the elements of the performance vector may be the sum-rate resulting from activating the corresponding virtual user in the downlink BC or uplink MAC. In this case, the performance vector is random and its value depends on the realization of the underlying  time-varying channel. In a given time-slot, the system utilities due to activating different virtual users may depend on each other. The performance vectors in different time-slots are assumed to be independent of each other.

\begin{Definition}[\bf{Performance Vector}]
The vector of jointly continuous variables $(R_{1,t},R_{2,t}, \cdots, R_{m,t})$, $t\in \mathbb{N}$ is the performance vector of the virtual users at time $t$. The sequence $(R_{1,t},R_{2,t}, \cdots, R_{m,t})$ is a sequence of independent\footnote{Note that the realization of the performance vector is independent over time, however the performance of the virtual users at a given time-slot may be dependent on each other.} vectors distributed identically according to the joint density $f_{R^m}$.
\end{Definition}
 Under temporal fairness, it is required that the fraction of time-slots in which each user is activated is bounded from below (above). The vector of lower (upper) bounds $\underline{w}^n$ ($\overline{w}^n)$ is called the lower (upper) temporal demand vector. We assume that the temporal demand vectors take values among the rational numbers (i.e. $(\underline{w}^n,\overline{w}^n)\in \mathbb{Q}^{n}\times \mathbb{Q}^n$). This does not cause a loss of generality since the rational numbers are dense in $\mathbb{R}$. Furthermore, we write $\underline{w}_i=\frac{\underline{k}_i}{\underline{d}_i},  i\in \mathbb{N}$ and $\overline{w}_i=\frac{\overline{k}_i}{\overline{d}_i}, i\in \mathbb{N}$, where $(\underline{k_i}, \underline{d}_i)$ and $(\overline{k_i}, \overline{d}_i)$ are assumed to be coprime integers.

The objective is to design a scheduling strategy satisfying the temporal fairness constraints in a given window-length while maximizing the resulting system utility. Accordingly, a scheduling strategy is defined as follows.

\begin{Definition}[\bf{$s$-Scheduler}] \label{Def:Strategy}
Consider the scheduling setup parametrized by $(n,\mathsf{V}, \underline{w}^n, \overline{w}^n, f_{R^m})$. A scheduling strategy $Q= (Q_t)_{t\in [s]}$  with window-length $s\in\mathbb{N}$ ($s$-scheduler) is a family of (possibly stochastic) functions $Q_t: \mathbb{R}^{m\times t} \to \mathsf{V}, t\in [s]$, where:
\begin{itemize}[leftmargin=*]
    \item { The input to $Q_t, t\in[s]$ is the
matrix of performance vectors $R^{m\times t}$ which consists of $t$ independently and identically distributed column vectors with distribution} $f_{R^m}$.
\item{ The temporal demand constraints are satisfied:
\begin{align}
P\left(\underline{w}_i\leq {A}_{i,s}^{Q} \leq \overline{w}_i, i\in [n]\right)=1,
\label{Def:tem_fair}
\end{align}}
\end{itemize}
where, the temporal share of user $u_i, i\in [n]$ up to time $t\in [s]$ is defined as
\begin{align}
A^Q_{i,t}=\frac{1}{t}\sum^t_{k=1}\mathbbm{1}_{\big\{u_i\in Q_k(R^{m\times k})\big\}}, \forall i\in [n], t\in [s].
\label{Def:temp_share}
\end{align}
\end{Definition}

Unless otherwise stated, we consider homogeneous systems where the scheduler is allowed to activate subsets of at most $N_{max}$ users at each time-slot.  More precisely, for a homogeneous multi-user system with $n$ users and maximum number of active users $N_{max}\leq n$, the set of virtual users is defined as
\[\mathsf{V}= \left\{\mathcal{V}_j\subset \mathcal{U}\big| |\mathcal{V}_j|\leq N_{max}\right\}. \]
We write $(n,N_{max}, \underline{w}^n,$ $\overline{w}^n,f_{R^m})$ instead of $(n,\mathsf{V}, \underline{w}^n,$ $\overline{w}^n,f_{R^m})$ to characterize a 
homogeneous system.

A scheduling setup where the user temporal shares are required to take a specific value, i.e. ${A}_{i,s}^Q=w_i, i\in [n]$, is called a setup with \textit{equality temporal constraints} and is parametrized by $(n,N_{max}, {w}^n, {w}^n, f_{R^m})$. A necessary condition for the existence of a scheduling strategy is that $\sum_{i\in [n]}w_i \leq N_{max}$ must hold since at most $N_{max}$ users can be activated simultaneously.

Note that some temporal demand vectors $(\underline{w}^n,\overline{w}^n)$ are not feasible given a specific window-length $s$. For instance if the window-length is taken to be $s=2$, then each user can be activated for $0$,$1$ or $2$ time-slots. Consequently, only temporal shares in the set $\{0,\frac{1}{2},1\}$ are feasible. The feasibility is also affected by $N_{max}$, the maximum number of users which can be activated simultaneously since $\sum_{i\in [n]}w_i =\sum_{i\in [n]} \frac{k_i}{d_i}\leq N_{max}$, where $d_i|s$. The following defines the set of feasible window-lengths given a temporal demand vector and virtual user set.

\begin{Definition}{\bf (Feasible Scheduling Window)}
For a given temporal demand vector $(\underline{w}^n,\overline{w}^n)\in \mathbb{Q}^{n}\times \mathbb{Q}^n$ and virtual user set $\mathsf{V}$, the window-length $s$ is called feasible if a scheduling strategy satisfying the temporal demand constraints $(\underline{w}^n,\overline{w}^n)$ exists. The set of all feasible window-lengths is denoted by $\mathcal{S}(\underline{w}^n,\overline{w}^n,\mathsf{V})$. For a homogeneous systems, the set of feasible window-lengths is denoted by $\mathcal{S}(\underline{w}^n,\overline{w}^n,N_{max})$.
\end{Definition}
Consider the setup with equality temporal fairness constraints $(n,N_{max}, {w}^n, {w}^n, f_{R^m})$. Assume that the  window-length $s$ is feasible (i.e. $s\in \mathcal{S}(w^n,w^n,N_{max})$). Then, there exists a vector of virtual user temporal shares $a^m$ satisfying the fairness constraints, where $a_j, j\in [m]$ is the fraction of time-slots in which the $j$th virtual user is activated. This yields a mapping from $w^n\in \mathbb{R}^n$ to $a^m\in \mathbb{R}^m$ which is described below. 

\begin{Definition}
\label{Def:Mapping}
For a tuple $(s,{w}^n,N_{max})$ such that $s\in \mathcal{S}({w}^n,{w}^n,N_{max})$, let $a^m$ be a vector of virtual user temporal shares such that i) $ \sum_{j\in [m]} a_j= 1$, ii) $\sum_{j: u_i\in \mathcal{V}_j} a_{j}= w_i,  \forall i\in [n]$, and iii) $0\leq a_{j}, \forall j\in [m]$,
where $a_j= \frac{l_j}{s}, j\in [m]$. We define the mapping $\Theta$ as $\Theta: (s,{w}^n,N_{max}) \mapsto (a_{1},a_{2},\cdots,a_m)$. 
\end{Definition}

 We make particular use of a class of non-opportunistic scheduling strategies called the \textit{ordered round robin} in investigating the feasibility of scheduling window-lengths. The ordered round robin strategy is an extension of single-user round robin to the multi-user settings, in which the virtual users are activated in a specific order.

 \begin{Definition}[{\bf Ordered Round Robin}] \label{Def:RR} For the scheduling setup $(n,N_{max}, \underline{w}^n, \overline{w}^n, f_{R^m})$ with window-length $s\in \mathcal{S}({w}^n,{w}^n,N_{max})$, an ordered round robin strategy has the form $Q_t(R^{m\times t})= \mathcal{V}_{j_t}, t\in [s]$ where $j_t$ is the unique index for which the following inequality holds
 \begin{align}
    \sum_{j< j_t} sa_j +1 \leq t\leq \sum_{j\leq j_t} sa_j, \label{Eq:RR}
 \end{align}
 where $a^m= \Theta (s,w^n,N_{max})$ for some $w^n$ such that 
 $\underline{w}_i\leq w_i\leq \overline{w}_i, i\in [n]$.
 \end{Definition}

 The average system utility of an $s$-scheduler is defined as:

\begin{Definition}[\bf{System Utility}]
For an $s$-scheduler $Q$:
\begin{itemize}[wide=0pt]
    \item {The average system utility up to time t, is defined as 
\begin{align}
U^Q_t&=\frac{1}{t}\sum^t_{k=1}\sum_{j=1}^m R_{j,k}\mathbbm{1}_{\big\{Q_k(R^{m\times k})=\mathcal{V}_{j}\big\}}.
\label{Def:sys_utility}
\end{align}}
\item{The variable $U^Q_s$ is called the average system utility for the $s$-scheduler.}  An $s$-scheduler $Q_s^*$ is optimal if and only if 
$Q_s^*\in\argmax_{Q\in \mathcal{Q}_s} U^Q_{s}$,
where $\mathcal{Q}_s$ is the set of all $s$-schedulers for the scheduling setup. The optimal utility is denoted by $U^*_s$.
\end{itemize}
\end{Definition}
Our objective is to study properties of $U^*_s$ and to design scheduling strategies satisfying the short-term fairness constraints while achieving average system utilities close to $U^*_s$.

\section{Characteristics of Short-term Schedulers}
\label{Sec:Ex}
In this section, we study the set of feasible window-lengths, and show that under equality temporal constraints, the set is non-contiguous, whereas, under inequality temporal constraints, the set is asymptotically contiguous. Consequently, under inequality constraints, for any given temporal demand vectors and a set of virtual users, there is a minimum window-length such that all larger window-lengths are feasible. This allows us to investigate the optimal utility function as the window-length is increased asymptotically. 
Furthermore, we show that the optimal utility function $U^*_s$ is a non-monotonic and supper-additive function of the window-length $s$. 

\subsection{Contiguity of the Feasible Window-length Set}
As a first step, we show an equivalence relation between the feasibility of a given window-length and the existence of an ordered round robin strategy with that window-length. This leads to a significant simplification of the proofs provided in the rest of the section. 

\begin{Proposition}
For an scheduling setup $(n,N_{max}, \underline{w}^n, \overline{w}^n, f_{R^m})$, the window-length $s$ is feasible iff there exists an ordered round robin strategy satisfying the temporal constraints. 
\end{Proposition}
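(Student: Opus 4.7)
The plan is to prove the two directions separately. The forward direction is immediate: if an ordered round robin strategy satisfies the temporal constraints, then by Definition~\ref{Def:Strategy} it is itself an $s$-scheduler, and so the window-length $s$ is feasible. The bulk of the work lies in the reverse direction, which says that whenever $s$ is feasible via some (possibly stochastic and opportunistic) scheduler, one can always ``derandomize'' it into an ordered round robin without losing feasibility.

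For the reverse direction I would start from an arbitrary feasible $s$-scheduler $Q$ and introduce the virtual-user activation counts $L_j := \sum_{k=1}^{s}\mathbbm{1}\{Q_k(R^{m\times k})=\mathcal{V}_j\}$ for $j\in[m]$. These are random variables taking values in the finite lattice of nonnegative integer $m$-tuples that sum to $s$. Since the user temporal shares decompose as $A_{i,s}^Q = \frac{1}{s}\sum_{j:u_i\in\mathcal{V}_j} L_j$, the probability-one event in~\eqref{Def:tem_fair} forces $(L_1,\dots,L_m)$ to land in a subset of this finite lattice with probability one. The key observation is that because the support is finite, at least one integer tuple $(l_1,\dots,l_m)$ in that subset is attained with strictly positive probability, and I would fix such a tuple.

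Next I would set $w_i := \frac{1}{s}\sum_{j:u_i\in\mathcal{V}_j} l_j$ and $a_j := l_j/s$, and verify the three defining conditions of the mapping $\Theta$ in Definition~\ref{Def:Mapping}, namely $\sum_j a_j = 1$, $\sum_{j:u_i\in\mathcal{V}_j} a_j = w_i$, and $a_j\geq 0$. By construction $\underline{w}_i\leq w_i\leq \overline{w}_i$, so $a^m = \Theta(s,w^n,N_{max})$ is a valid input for constructing an ordered round robin via Definition~\ref{Def:RR}.

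Finally, the deterministic ordered round robin built from this $a^m$ through~\eqref{Eq:RR} activates $\mathcal{V}_j$ on exactly $sa_j = l_j$ time-slots, so its (non-random) user temporal shares reduce to $\frac{1}{s}\sum_{j:u_i\in\mathcal{V}_j} l_j = w_i \in [\underline{w}_i,\overline{w}_i]$, establishing feasibility through an ordered round robin. I anticipate the only delicate point to be the elementary finite-support argument that pins down a single deterministic tuple $(l_1,\dots,l_m)$ inside the almost-sure event; once such a tuple is in hand, the rest is bookkeeping between user and virtual-user temporal shares.
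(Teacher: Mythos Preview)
Your proposal is correct and follows essentially the same approach as the paper: extract from a feasible $s$-scheduler a temporal-share vector $w^n$, pass to virtual-user shares $a^m=\Theta(s,w^n,N_{max})$, and then build the ordered round robin via~\eqref{Eq:RR}, with the converse being trivial. Your treatment is in fact more careful than the paper's, which simply writes ``assume that the strategy results in temporal share vector $w^n$'' without addressing that $w^n$ may be random; your finite-support derandomization argument to pin down a deterministic tuple $(l_1,\dots,l_m)$ is exactly the step that makes this rigorous.
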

\begin{proof}
Assume that the window-length is feasible, then by definition a strategy satisfying the temporal fairness criteria exists. Assume that the strategy results in temporal share vector $w^n$. Let $a^m=\Theta(w^n)$, where the mapping $\Theta(\cdot)$ is defined in Definition \ref{Def:Mapping}. Using Equation \eqref{Eq:RR}, one can construct an ordered round robin strategy satisfying the fairness constraints. The converse proof holds by definition of feasibility.
\end{proof}

The following proposition shows that under equality temporal constraints, the set of feasible window-lengths is non-contiguous at every point.

\begin{Proposition} \label{Prop:Eq-feasibility}
For any scheduling setup with equality constraints $(n,N_{max}, {w}^n, {w}^n, f_{R^m})$, the set of feasible  is characterized as follows:
\begin{align*}
    \mathcal{S}= \Big\{k d_{res}(w^n)\big| k\in \mathbb{N}\Big\},
\end{align*}
where $d_{res}(w^n)= lcm(d_{1},d_2,\cdots,d_n)$, where $w_i= \frac{k_i}{d_i}, i\in [n]$ and $(k_i,d_i)$ are coprime integers. 
\end{Proposition}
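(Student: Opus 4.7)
The plan is to leverage Proposition 1 to reduce feasibility of a window-length $s$ to a pure integer-existence question, and then to handle the two directions via elementary number theory and a standard $0$-$1$ matrix realisation result. By Proposition 1, $s\in\mathcal{S}$ iff there exists a non-negative integer vector $l^m$ with $\sum_{j\in[m]} l_j = s$ and $\sum_{j:\, u_i\in \mathcal{V}_j} l_j = s w_i$ for every $i\in [n]$: indeed, setting $a_j = l_j/s$ in Definition \ref{Def:Mapping} reproduces the mapping $\Theta$, and the ordered round robin of Definition \ref{Def:RR} then realises the schedule.

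For the necessity direction, I would argue that if such $l^m$ exists, then for each $i\in[n]$ the quantity $s w_i = s k_i/d_i$ must be a non-negative integer. Since $(k_i,d_i)$ are coprime, Euclid's lemma forces $d_i \mid s$. Taking this simultaneously over all $i\in[n]$ yields $lcm(d_1,\ldots,d_n) = d_{res}(w^n)$ divides $s$, i.e. $s$ is a positive integer multiple of $d_{res}(w^n)$.

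For the sufficiency direction, write $s = k\, d_{res}(w^n)$ for some $k\in\mathbb{N}$, so that $s w_i\in\mathbb{N}$ for every $i\in[n]$. I would reduce the construction of $l^m$ to the existence of a matrix $B\in\{0,1\}^{n\times s}$ whose $i$-th row sum equals $s w_i$ and each of whose column sums is at most $N_{max}$. Each column $t$ of such a $B$ picks out a subset of users of size at most $N_{max}$, which by homogeneity is a valid virtual user $\mathcal{V}_{j_t}\in\mathsf{V}$, so setting $l_j := |\{t\in[s]:\,\mathcal{V}_{j_t}=\mathcal{V}_j\}|$ gives the required integers. Existence of $B$ follows from the Gale--Ryser theorem applied to the row sequence $(sw_i)_{i\in[n]}$ and the constant column cap $N_{max}$, using the admissibility inequality $\sum_i s w_i = s\sum_i w_i\le s N_{max}$, which is exactly the necessary condition $\sum_i w_i\le N_{max}$ from the problem formulation.

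The main obstacle I anticipate is this last realisation step: although Gale--Ryser is standard, invoking it in a paper on scheduling may feel opaque, and a self-contained constructive argument is preferable. A clean alternative is a greedy/priority construction in which, at each time slot $t\in[s]$, one activates the $N_{max}$ users with the largest remaining demand $sw_i - \sum_{\tau<t} B_{i,\tau}$ (ties broken arbitrarily) and argues by induction that the residual demand sequence remains majorised by the remaining capacity, so no user is ever starved. This construction is attractive because it produces a schedule that is literally of the ordered round robin form in Definition \ref{Def:RR}, thereby closing the loop with Proposition 1 without a separate invocation of a combinatorial matrix theorem.
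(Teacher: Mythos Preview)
Your necessity argument coincides with the paper's: both extract $d_i\mid s$ from coprimality of $(k_i,d_i)$ and then pass to the least common multiple. The sufficiency direction is where the two proofs diverge. The paper does not invoke any matrix-realisation theorem; it writes down an explicit schedule in one line: user $i$ is declared active at time $t$ whenever $\sum_{i'<i}sw_{i'}< t \leq \sum_{i'\leq i}sw_{i'}$ (with the natural cyclic reading modulo $s$ when $\sum_i w_i>1$), and the cardinality bound $|Q_t|\leq N_{max}$ then drops out of $\sum_i sw_i\leq sN_{max}$. This is exactly the cyclic-interval packing that underlies the Gale--Ryser condition, but stated directly for the problem at hand, so the paper's construction is shorter and more transparent. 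Your Gale--Ryser route is correct but heavier than necessary; your greedy alternative is also correct (the two invariants $\max_i r_i^{(t)}\leq s-t+1$ and $\sum_i r_i^{(t)}\leq (s-t+1)N_{max}$ are easily maintained by the ``largest-residual-first'' rule) and has the virtue of being fully self-contained. One small inaccuracy: the greedy output is not \emph{literally} of the ordered-round-robin form, since the same virtual user may recur in non-consecutive slots; this is harmless, however, because Proposition~1 already tells you that any feasible schedule can be rearranged into an ordered round robin via the map $\Theta$, so producing \emph{some} feasible schedule suffices.
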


\begin{proof}
First we prove that  $\mathcal{S}\subseteq \{k\cdot d_{res}| k\in \mathbb{N}\}$. Assume that $s\in \mathcal{S}({w}^n,{w}^n,N_{max})$, then there exists a scheduling strategy $Q$ satisfying the temporal fairness constraints. Let $l_j, j\in [m]$ be the number of time-slots when the $j$th virtual user is activated by $Q$. Define $a_j= \frac{l_j}{s}, j\in [m]$ as the temporal share of the $j$th virtual user. Then,
\begin{align*}
    \sum_{j: u_i\in \mathcal{V}_j} a_j= \sum_{j: u_i\in \mathcal{V}_j} \frac{l_j}{s}= w_i= \frac{k_i}{d_i}, \forall i\in [n].
\end{align*}
As a result, we must have $d_i | k_is,  i\in [n]$. Using the fact that $(k_i,d_i)$ are coprime, we conclude that $d_i|s, i\in [n]$. Consequently, $d_{res}|s$. This proves that $\mathcal{S}\subseteq \{k d_{res}| k\in \mathbb{N}\}$. In the next step, we will show that for any $s=k\cdot d_{res}, k\in \mathbb{N}$, there exists a scheduling strategy satisfying the equality temporal constraints. Note that since $d_{res}=lcm(d^n)$, we can write $w_i= \frac{k'_i}{k d_{res}}, k'_i\in \mathbb{N}, i\in [n]$. 
It is straightforward to show that the following choice of $Q_t(R^{m\times t})$ satisfies the fairness constraints:
\begin{align*}
    i\in Q_t(R^{m\times t}) \iff \sum_{i'<i}k'_i=\sum_{i'<i}s w_i< t\leq \sum_{i'\leq i}s w_i=\sum_{i'\leq i}k'_i.  
\end{align*}
Note that in the above construction, $Q_t(R^{m\times t})$ is a valid virtual user. In other words, we have $|Q_t(R^{m\times t})|\leq N_{max}$ since $\sum_{i\in [n]}w_i \leq N_{max}$.
\end{proof}
The following theorem shows that under inequality temporal fairness constraints, the set of feasible window-lengths is asymptotically contiguous. 
\begin{Theorem} \label{thm:feas-ineq}
For any scheduling setup with inequality constraints $(n,N_{max}, \underline{w}^n, \overline{w}^n, f_{R^m})$, the set of feasible window-lengths is as follows:
\begin{align}
\label{Eq:Set1}
    \mathcal{S}= \bigcup_{w^n\in \mathcal{W}}\Big\{k d_{res}(w^n)\big| k\in \mathbb{N}\Big\},
\end{align}
where $d_{res}(w^n)$ is defined in Proposition \ref{Prop:Eq-feasibility}, and $\mathcal{W}=\{w^n|\underline{w}^n\leq w^n\leq \overline{w}^n: \sum_{i\in [n]}w_i\leq N_{max}\}$. Particularly, the set is contiguous for large enough window lengths. More precisely, we have:
\begin{align}
\label{Eq:Set2}
    \Big\{d_{res}(\underline{w}^n)+nd_{\epsilon}+k\big|k\in \mathbb{N}\Big\}\subseteq \mathcal{S},
\end{align}
where $d_{\epsilon}=max(d_{\alpha},d_{\delta})$, $\frac{k_{\alpha}}{d_{\alpha}}=N_{max}-\sum_{i\in [n]}\underline{w}_i$, $\frac{k_{\delta}}{d_{\delta}}=min_{i\in [n]}(\overline{w}_i-\underline{w}_i)$, and $(k_{\alpha},d_{\alpha})$, and $(k_{\delta},d_{\delta})$ are coprime integers. 
\end{Theorem}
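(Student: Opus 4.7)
\textbf{Proof plan for Theorem \ref{thm:feas-ineq}.} My plan splits into two parts matching the two displayed equations. I would first establish \eqref{Eq:Set1} by reducing each inequality-constraint instance to an equality-constraint instance and invoking Proposition \ref{Prop:Eq-feasibility}; then prove the inclusion \eqref{Eq:Set2} by an explicit rounding construction that perturbs $\underline{w}^n$ upward into a vector $w^n\in\mathcal{W}$ whose denominators already divide $s$.

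For \eqref{Eq:Set1}, the forward direction uses the observation that any $s$-scheduler $Q$ satisfying the inequality constraints realizes a concrete empirical share vector $w^n=(A^Q_{1,s},\ldots,A^Q_{n,s})$ that lies in $\mathcal{W}$ (it satisfies the inequality constraints by assumption and $\sum_{i\in[n]} w_i\le N_{max}$ because at most $N_{max}$ users are active per slot). Since $Q$ then also serves as an equality-$w^n$ scheduler, Proposition \ref{Prop:Eq-feasibility} applied to $(n,N_{max},w^n,w^n,f_{R^m})$ gives $d_{res}(w^n)\mid s$. The reverse direction is immediate: for any $w^n\in\mathcal{W}$ with $d_{res}(w^n)\mid s$, Proposition \ref{Prop:Eq-feasibility} produces an equality-$w^n$ scheduler, which automatically satisfies the looser inequality constraints.

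For \eqref{Eq:Set2}, given $s\ge d_{res}(\underline{w}^n)+n d_\epsilon+1$, I would set $w_i:=\lceil s\,\underline{w}_i\rceil/s$ for each $i\in[n]$, so that $sw_i\in\mathbb{Z}$ and hence $d_{res}(w^n)\mid s$ by construction. It remains to show $w^n\in\mathcal{W}$. The lower bound $w_i\ge\underline{w}_i$ is automatic. For the upper bound, $w_i-\underline{w}_i<1/s$, and since $\overline{w}_i-\underline{w}_i\ge k_\delta/d_\delta\ge 1/d_\delta$ while $s\ge n d_\epsilon\ge d_\delta$, we get $w_i\le\overline{w}_i$. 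For the aggregate constraint, $\sum_{i\in[n]}w_i\le \sum_{i\in[n]}\underline{w}_i+n/s = N_{max}-k_\alpha/d_\alpha+n/s$, so $\sum_{i\in[n]}w_i\le N_{max}$ reduces to $n/s\le k_\alpha/d_\alpha$, which follows from $s\ge n d_\epsilon\ge n d_\alpha$ together with $k_\alpha\ge 1$. Applying \eqref{Eq:Set1} to this $w^n$ then yields $s\in\mathcal{S}$.

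The main obstacle is coordinating the two independent sources of slack, namely the per-user gap $\overline{w}_i-\underline{w}_i$ (measured by $d_\delta$) and the aggregate gap $N_{max}-\sum_i\underline{w}_i$ (measured by $d_\alpha$), since a naive rounding of $\underline{w}_i$ could violate either an upper demand or the $N_{max}$ budget. The threshold $s\ge n d_\epsilon$ is precisely what is needed for the $1/s$ round-off errors to stay strictly inside both slacks; the extra $d_{res}(\underline{w}^n)$ summand in the theorem's bound is comfortable additional slack beyond what this particular rounding requires. A degenerate subcase worth flagging is $\sum_i\underline{w}_i=N_{max}$, which forces $k_\alpha=0$ and collapses the argument back to the equality case of Proposition \ref{Prop:Eq-feasibility}.
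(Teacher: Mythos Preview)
Your proposal is correct and follows essentially the same approach as the paper: both reduce \eqref{Eq:Set1} to Proposition~\ref{Prop:Eq-feasibility}, and for \eqref{Eq:Set2} both round each $\underline{w}_i$ upward to the nearest multiple of $1/s$ and then verify that the resulting vector lies in $\mathcal{W}$ using the slacks $1/d_\delta$ and $1/d_\alpha$. Your formulation via $w_i=\lceil s\,\underline{w}_i\rceil/s$ is in fact a slightly more direct packaging of the paper's parametrized family $w_i(l)=(l_i+nd_\epsilon+k-l)/s$, but the substance of the argument is identical.
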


\begin{proof}
The proof of Equation \eqref{Eq:Set1} follows from Proposition \ref{Prop:Eq-feasibility}. We prove Equation \eqref{Eq:Set2} by construction. Fix $k\in \mathbb{N}$ and let $\underline{w}_i=\frac{l_i}{d_{res}(\underline{w}^n)}, i\in [n]$. Define $w_i(l)=\frac{l_i+nd_{\epsilon}+k-l}{d_{res}(\underline{w}^n)+nd_{\epsilon}+k}, l\in \{0,1,\cdots ,nd_{\epsilon}+k\}$.  In order to have $w^n(l)\in \mathcal{W}$, one needs to verify that $\underline{w}^n\leq w^n(l)\leq \overline{w}^n$ and $\sum_{i\in [n]}w_i(l)\leq N_{max}$. Clearly, we have $\underline{w}_i \leq w_i(0), i\in [n]$ and $w_i(l)$ is decreasing in $l$. Also, $\underline{w}_i > w_i(nd_{\epsilon}+k), i\in [n]$. Consequently, there exists $l_i^*, i\in [n]$ be such that $w_i(l_i^*+1)\leq \underline{w}_i\leq w_i(l^*_i)$. We argue that $\sum_{i\in [n]}w_i(l^*_i)\leq N_{max}$. To see this, note that by construction, we have:
\begin{align*}
    \forall i\in [n]: &w_i(l^*_i)-\underline{w}_i\leq
    w_i(l^*_i)-w_i(l_i^*+1)=
    \frac{1}{d(\underline{w}^n)+nd_{\epsilon}+k}\\
    &\Rightarrow \sum_{i\in [n] }w_i(l^*_i)- \sum_{i\in [n] } \underline{w}_i\leq \frac{1}{d_{\epsilon}}\leq \frac{1}{d_{\alpha}}.
\end{align*}
On the other hand, by definition, we have $\frac{1}{d_{\alpha}}\leq N_{max}-\sum_{i\in [n]}\underline{w}_i$. Consequently, we have shown that $\sum_{i\in [n]}w_i(l^*_i)\leq N_{max}$. Furthermore,
\begin{align*}
   w_i(l^*_i)-\underline{w}_i \leq \frac{1}{nd_{\epsilon}} \Rightarrow 
   \overline{w}_i-w_i(l^*_i)\geq \frac{n-1}{nd_{\epsilon}}\geq 0. 
\end{align*}
Hence, ${w^*}^n= (w_i(l^*_i))_{i\in [n]}\in \mathcal{W}$. This concludes the proof. 
\end{proof} 

\subsection{Characteristics of Optimal Utility}
Next, we investigate the monotonicity of the optimal utility function over the feasible set of window-lengths, and show that in general, it is non-monotonic. This is explained through the following example and the ensuing proposition.

\begin{Example}
\label{Ex:2}
Consider a scheduling system consisting of two users ($n=2$) with $N_{max}=1$. Let the users' utilities be fixed over time and equal to $R_2=2R_1=2$. Consider the temporal constraints $\underline{w}_1=\underline{w}_2=\frac{1}{4}$ and $\overline{w}_1=\overline{w}_2=\frac{3}{4}$. 
Note that $\forall k\geq 2$, $w^2= \big(\frac{\floor{k/2}}{k},\frac{\floor{k/2}}{k}\big)$ is feasible. From Equation \eqref{Eq:Set1},
we have $\mathcal{S}=\{s|s\geq 2\}$. Note that the utility of user $2$ is larger than that of user 1. So, an optimal strategy activates the second user as long as the first user's fairness constraints are not violated. The optimal utility function is given as follows: $U^*_2= \frac{3}{2}$, $U^*_3=\frac{5}{3}$, and $U^*_s=\frac{s+3\floor{s/4}}{s}, s\geq 4$. 
The optimal utility is decreasing when $4|s$ and increasing everywhere else. 
\end{Example}

 Consequently, we have shown the following proposition.

\begin{Proposition}
There exists scheduling setups $(n,N_{max}, \underline{w}^n, $ $\overline{w}^n, f_{R^m})$ for which the optimal utility function $U^*_s$ is non-monotonic in $s$ in an infinite number of points.
\end{Proposition}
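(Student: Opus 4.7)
The plan is to use Example~\ref{Ex:2} as an explicit witness: by Theorem~\ref{thm:feas-ineq} every integer $s\ge 2$ is feasible for that setup, so in particular both $s=4k$ and $s=4k+1$ lie in $\mathcal{S}$ for every $k\in\mathbb{N}$. It then suffices to exhibit an infinite family of values $s$ at which $U^*_s>U^*_{s+1}$, which immediately implies non-monotonicity at infinitely many points.

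In the example, any $s$-scheduler is fully characterised by the non-negative integer slot-counts $k_1,k_2$ allotted to the two users, and delivers normalised utility $(k_1+2k_2)/s$. The temporal-fairness constraints read $\ceil{s/4}\le k_i\le\floor{3s/4}$, and $N_{max}=1$ forces $k_1+k_2\le s$. Since $R_2>R_1>0$, any optimum pushes $k_2$ up to its upper bound $\floor{3s/4}$ and absorbs the remaining slots by $k_1=s-\floor{3s/4}=\ceil{s/4}$; realisability of this allocation as an actual strategy follows from Proposition~1 through an ordered round robin in the sense of Definition~\ref{Def:RR}.

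Specialising to $s=4k$ yields $U^*_{4k}=\tfrac{k+2(3k)}{4k}=\tfrac{7}{4}$, and specialising to $s=4k+1$, using $\floor{3(4k+1)/4}=3k$ and $\ceil{(4k+1)/4}=k+1$, yields $U^*_{4k+1}=\tfrac{(k+1)+2(3k)}{4k+1}=\tfrac{7k+1}{4k+1}$. Cross-multiplying, $(7k+1)/(4k+1)<7/4$ iff $28k+4<28k+7$, which holds for every $k\ge 1$. Consequently $U^*_{4k+1}<U^*_{4k}$ at every positive multiple of $4$, producing the required infinite set of non-monotonicity points and proving the proposition.

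The only steps requiring care are (i) realisability of the chosen slot-counts, which is handled uniformly by Proposition~1, and (ii) the optimality of saturating $k_2$ at $\floor{3s/4}$, which is immediate from $R_2>R_1>0$ combined with the constraint $k_1+k_2\le s$. No analytic or combinatorial obstacle arises, since the problem in the example reduces to a one-parameter integer optimisation.
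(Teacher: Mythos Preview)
Your proof is correct and follows exactly the paper's route, which simply invokes Example~\ref{Ex:2} as the witness; your treatment is in fact more explicit than the paper's, which only states the formula for $U^*_s$ without deriving it. One small point of exposition: the claim in your plan that ``an infinite family of values $s$ at which $U^*_s>U^*_{s+1}$ \ldots\ immediately implies non-monotonicity'' is not literally true on its own (a strictly decreasing sequence satisfies it), but your own computation that $U^*_{4k}=7/4$ is constant in $k$ supplies the missing piece, since together with $U^*_{4k+1}<7/4$ it forces a strict increase somewhere on every block $\{4k+1,\dots,4(k+1)\}$ and hence yields infinitely many sign changes of $U^*_{s+1}-U^*_s$.
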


The next theorem shows that the optimal utility is a superadditive function of the window-length. This is useful in proving the existence of an optimal utility value as the window-length is increased asymptotically and in analyzing rates of convergence of the optimal utility function to the optimal utility value. 

\begin{Theorem}
\label{Th:Sup}
For any  scheduling setup $(n,N_{max}, \underline{w}^n, $ $\overline{w}^n, f_{R^m})$, the optimal utility function $U^*_s$  is superadditive in $s$:
\begin{align*}
    \forall s,s' \in \mathcal{S}: sU^*_s+s'U^*_{s'}\leq (s+s')U^*_{s+s'}.
\end{align*}
\end{Theorem}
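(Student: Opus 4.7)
The plan is to prove superadditivity by concatenation: given optimal schedulers $Q^*_s$ and $Q^*_{s'}$ for windows of length $s$ and $s'$ respectively, I will glue them together to produce a valid $(s+s')$-scheduler whose expected utility is the time-weighted average $\frac{s\,U^*_s + s'\,U^*_{s'}}{s+s'}$, from which the inequality follows immediately by optimality of $U^*_{s+s'}$.

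More concretely, first I would define a new scheduling strategy $\widetilde{Q} = (\widetilde{Q}_t)_{t\in [s+s']}$ by setting $\widetilde{Q}_t = Q^*_{s,t}(R^{m\times t})$ for $t\in [s]$ and $\widetilde{Q}_{s+t} = Q^*_{s',t}(R'^{m\times t})$ for $t\in [s']$, where $R'^{m\times t}$ denotes the performance vectors indexed from time $s+1$ to $s+t$ (these are i.i.d.\ with the same distribution $f_{R^m}$ and independent of the first block by assumption). The second block uses only the fresh performance vectors of that block, so $\widetilde{Q}_{s+t}$ is a legitimate function of $R^{m\times (s+t)}$.

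Next I need to verify that $\widetilde{Q}$ satisfies the temporal fairness constraint on the window $[s+s']$. By the definition in \eqref{Def:temp_share}, the temporal share of user $u_i$ after $s+s'$ slots decomposes as
\begin{align*}
A^{\widetilde{Q}}_{i,s+s'} = \frac{s\,A^{Q^*_s}_{i,s} + s'\,A^{Q^*_{s'}}_{i,s'}}{s+s'}.
\end{align*}
Since $Q^*_s$ and $Q^*_{s'}$ are valid schedulers, almost surely $A^{Q^*_s}_{i,s}\in[\underline{w}_i,\overline{w}_i]$ and $A^{Q^*_{s'}}_{i,s'}\in[\underline{w}_i,\overline{w}_i]$, and since $[\underline{w}_i,\overline{w}_i]$ is closed under convex combinations, $A^{\widetilde{Q}}_{i,s+s'}\in[\underline{w}_i,\overline{w}_i]$ almost surely as well. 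Hence $\widetilde{Q}\in \mathcal{Q}_{s+s'}$, which also confirms that $s+s'\in \mathcal{S}$.

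Finally, from the definition \eqref{Def:sys_utility} the random system utility of $\widetilde{Q}$ over the combined window splits as
\begin{align*}
(s+s')\,U^{\widetilde{Q}}_{s+s'} = s\,U^{Q^*_s}_{s} + s'\,U^{Q^*_{s'}}_{s'},
\end{align*}
so taking expectations and applying the optimality of $U^*_{s+s'}$ yields
\begin{align*}
(s+s')\,U^*_{s+s'} \geq (s+s')\,\mathbb{E}\bigl[U^{\widetilde{Q}}_{s+s'}\bigr] = s\,U^*_s + s'\,U^*_{s'},
\end{align*}
which is the claimed superadditivity. The only real subtlety is the fairness-preservation step, where I rely on the convexity of the interval $[\underline{w}_i,\overline{w}_i]$ and on the independence of the performance vectors across time-slots, which is exactly what allows the two optimal sub-schedulers to be run back-to-back without interaction; everything else is bookkeeping.
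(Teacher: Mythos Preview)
Your proposal is correct and follows exactly the paper's own argument: concatenate optimal $s$- and $s'$-schedulers and observe that both the temporal shares and the cumulative utility split as convex combinations of the two blocks. Your write-up is in fact more explicit than the paper's, which simply asserts that the concatenated strategy satisfies the fairness constraints and has average utility $\frac{s}{s+s'}U^*_s+\frac{s'}{s+s'}U^*_{s'}$.
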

\begin{proof}
 The proof follows by a construction argument. Let $Q^*_s$ and $Q^*_{s'}$ be optimality achieving strategies for window lengths $s$ and $s'$, respectively. Consider the concatenated strategy $Q_{s+s'}$ defined as follows:
\begin{align*}
    Q_{s+s',t}=
    \begin{cases}
    Q^*_{s,t} \qquad & \text{ if } t\leq s\\
    Q^*_{s',t}\qquad & \text{ if } s+1\leq t\leq s+s',
    \end{cases}
\end{align*}
where $Q^*_{s,t}$ and $Q^*_{s',t}$ are the output of the scheduling strategies $Q^*_{s}$ and $Q^*_{s'}$ at time $t$, respectively. The strategy $Q_{s+s'}$ satisfies the temporal fairness constraints and has average utility $\frac{s}{s+s'}U^*_s+\frac{s'}{s+s'}U^*_{s'}$. This concludes the proof. 
\end{proof}
The following is direct consequence of Theorem \ref{Th:Sup}.
\begin{Corollary} \label{Cor:Sup}
For the setup $(n,N_{max}, \underline{w}^n, $ $\overline{w}^n, f_{R^m})$, let $U^*$ be the optimal utility under long-term temporal fairness constraints. Then, $U^*_s\leq U^*, \forall s\in \mathcal{S}$ and $\lim_{s\to \infty} U^*_s=U^*$. 
\end{Corollary}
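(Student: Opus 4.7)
The corollary has two parts: the pointwise bound $U^*_s \le U^*$ on every feasible $s$, and the convergence $\lim_{s\to\infty} U^*_s = U^*$. The first I would handle by a periodic-extension construction that turns any $s$-scheduler into a long-term scheduler with matching utility; the second I would obtain from Theorem \ref{Th:Sup} plus Fekete's lemma on the eventually-contiguous tail of $\mathcal{S}$ given by Theorem \ref{thm:feas-ineq}, combined with a constructive approximation argument.

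\textbf{Upper bound.} Given an optimal $s$-scheduler $Q^*_s$, I would define a long-term scheduler $\widetilde{Q}$ by concatenating independent copies of $Q^*_s$ over consecutive windows of length $s$. Since the performance vectors are i.i.d.\ across time, the per-window statistics are i.i.d., and Definition \ref{Def:Strategy} gives $\underline{w}_i \le A^{Q^*_s}_{i,s} \le \overline{w}_i$ almost surely in each window. The strong law of large numbers then yields $\lim_{t\to\infty} A^{\widetilde{Q}}_{i,t} = \mathbb{E}[A^{Q^*_s}_{i,s}] \in [\underline{w}_i, \overline{w}_i]$ almost surely, so $\widetilde{Q}$ satisfies the long-term constraints while its long-run utility equals $\mathbb{E}[U^{Q^*_s}_s] = U^*_s$. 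Hence $U^*_s \le U^*$ for every $s\in\mathcal{S}$.

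\textbf{Convergence and main obstacle.} By Theorem \ref{thm:feas-ineq} there is $s_0$ such that every integer $s \ge s_0$ is feasible, and Theorem \ref{Th:Sup} then says $b_s := sU^*_s$ is superadditive on $\{s \ge s_0\}$. Since the previous paragraph gives $b_s/s \le U^*$, Fekete's lemma applies and yields $\lim_{s\to\infty} U^*_s = \sup_{s\ge s_0} U^*_s \le U^*$. For the reverse inequality I would, for each $\epsilon > 0$, take a long-term scheduler $Q^{lt}$ whose long-run utility exceeds $U^* - \epsilon/2$, run it for the first $s-\ell$ slots of a window of length $s$, and use a deterministic ordered round robin from Definition \ref{Def:RR} on a residual demand vector during the last $\ell$ slots to force $A^Q_{i,s} \in [\underline{w}_i, \overline{w}_i]$ on every sample path; a Hoeffding-type bound on $A^{Q^{lt}}_{i,s-\ell}$ shows that $\ell = O(\sqrt{s\log s})$ suffices with probability one, so the tail costs $o(1)$ in utility. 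The delicate point, which is the main obstacle, is precisely this correction step: the short-term constraint in Definition \ref{Def:Strategy} must hold almost surely on every sample path, not merely with high probability, which forces one to pair a typicality analysis of $Q^{lt}$ with an explicit round robin on residual demands and to exploit the slack $\overline{w}_i - \underline{w}_i$ together with the padded feasibility construction used in the proof of Theorem \ref{thm:feas-ineq}; without this slack the argument collapses, consistent with the rigidity observed under equality constraints in Proposition \ref{Prop:Eq-feasibility}.
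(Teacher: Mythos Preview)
The paper itself gives no proof beyond declaring the corollary a ``direct consequence'' of superadditivity (Theorem~\ref{Th:Sup}); Fekete's lemma on the eventually contiguous tail of $\mathcal{S}$ then yields existence of $\lim_{s} U^*_s=\sup_s U^*_s$, but identifying this value with the long-term optimum $U^*$ needs the two extra ingredients you supply. Your periodic-extension argument for $U^*_s\le U^*$ is correct and is exactly the natural unstated step; in the paper the identification $\sup_s U^*_s=U^*$ is in fact only established afterwards, in the proof of Theorem~\ref{Th:Conv}.

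Your approachability argument, however, has a genuine gap. Running the long-term optimal scheduler for the first $s-\ell$ slots and then correcting with an ordered round robin on the residual demand does \emph{not} produce a valid $s$-scheduler for any deterministic $\ell=o(s)$: with strictly positive probability the TBS activates some user $i$ fewer than $\lceil s\underline{w}_i\rceil-\ell$ times during the first $s-\ell$ slots, and once that happens no completion of the last $\ell$ slots can meet the lower constraint. A Hoeffding bound only makes this event improbable, not null, so the almost-sure requirement in Definition~\ref{Def:Strategy} fails. You correctly flag this as the main obstacle, but ``pairing a typicality analysis with round robin on residual demands'' does not close it --- once those $s-\ell$ opportunistic choices are committed, the atypical sample paths are unrepairable regardless of the slack $\overline{w}_i-\underline{w}_i$ or the padding from Theorem~\ref{thm:feas-ineq}. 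The fix, which is precisely what the paper's ATBS does in Theorem~\ref{Th:Conv}, is to interleave the feasibility safeguard with the opportunistic rule from time $1$ rather than deferring it to a terminal phase: let $A$ be the random first time the safeguard actually constrains the choice (so that the scheduler coincides with the TBS on $[1,A-1]$), bound $P(A=s)$ from below by the probability of the $\epsilon$-typical set, and apply Wald's identity to obtain $U^*_s\ge \tfrac{1}{s}\mathbb{E}[A]\,U^*\ge P(A=s)\,U^*\to U^*$. If you prefer to keep a two-phase picture, the correct variant is to make the switch time random (switch to round robin at the first $t$ at which one more opportunistic step could jeopardize feasibility of the remaining window), after which the same stopping-time computation applies.
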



\section{Multi-user Scheduling Strategy}
\label{Sec:Scheme}
In this section, we provide a scheduling strategy satisfying short-term fairness constraints for an arbitrary feasible window-length. The strategy achieves optimal utility as the window-length is increased asymptotically. We further derive the rates of convergence to the long-term optimal utility. 

It was shown in \cite{arxiv}, that under long-term fairness constraints, a special class of scheduling strategies called threshold based strategies (TBS) achieve optimal system utility. However, application of TBSs does not guarantee short-term fairness. In the following, we modify TBSs to ensure that the fairness criteria are satisfied in the given window-length. At a high level, the proposed strategy modifies the set of virtual users which can be activated at each time-slot based on the virtual users which have been activated in the prior time-slots so as to ensure that the short-term fairness constraints are satisfied. The proposed class of scheduling strategies are called \textit{augmented threshold based strategies} (ATBS). Algorithm \ref{alg:atbs} provides a step-by-step description of the ATBS. The following defines the class of ATBSs.

\begin{Definition}[\bf{ATBS}]
\label{Def:ATBS}
For the scheduling setup $(n,N_{max}, \underline{w}^n, \overline{w}^n, f_{R^m})$ with window-length $s\in \mathcal{S}$, an ATBS is characterized by the vector $\lambda^n\in \mathbb{R}^n$. The strategy $Q_{ATBS}(s,\lambda^n)=(Q_{ATBS,t})_{t\in \mathbb{N}}$ is defined as:
\begin{align}
Q_{ATBS,t}\big(R^{m\times t}\big)=\argmax_{\mathcal{V}_j\in\mathsf{V}_{t}} ~M\big(\mathcal{V}_j,R_{t,j}\big), ~t\in \mathbb{N},
\end{align}
where $M\big(\mathcal{V}_j,R_{t,j}\big)=R_{t,j}+\sum_{i=1}^n \lambda_i \mathbbm{1}_{\{u_i\in\mathcal{V}_{j}\}}$ is the `scheduling measure' corresponding to the virtual user $\mathcal{V}_j$, and $\mathsf{V}_t$ is the 
`feasible virtual user set' at time $t$ and consists of all virtual users $\mathcal{V}_j$ satisfying the following conditions:
\begin{align}
&s-t\geq  \max_{i\in [n]} \left(\ceil{s\underline{w}_i}-(t-1)A^Q_{i,t-1}- \mathbbm{1}_{\{u_i\in \mathcal{V}_j\}}\right), 
\label{Eq:1}
\\
& \floor{s\overline{w}_i}\geq (t-1)A^Q_{i,t-1}+ \mathbbm{1}_{\{u_i\in \mathcal{V}_j\}}, i\in [n],
\label{Eq:2}\\
& (s-t)N_{max} \geq  \sum_{i\in [n]}\left(\ceil{s\underline{w}_i}-(t-1)A^Q_{i,t-1}- \mathbbm{1}_{\{u_i\in \mathcal{V}_j\}}\right)^+,
\label{Eq:3}
\end{align}
where $x^+= x\times \mathbbm{1}_{x\geq 0}$.

\label{def:U_TBS}
\end{Definition}

Inequalities \eqref{Eq:1}--\eqref{Eq:3} ensure short-term fairness. More specifically, \eqref{Eq:1} and  \eqref{Eq:3} verify that if $\mathcal{V}_j$ is activated in the next time-slot, then the lower temporal demands can be satisfied in the remaining time-slots. Inequality \eqref{Eq:2} verifies the feasibility of the upper temporal demands given that $\mathcal{V}_j$ is activated in the next time-slot. 
\begin{Remark}
\label{Rem:sat}
If $t=s$, then Equations \eqref{Eq:1}-\eqref{Eq:3} are equivalent to the temporal fairness constraints in Definition \ref{Def:Strategy}.
\end{Remark}

In order to reduce the computational complexity of the ATBS, the feasible virtual user set may be derived iteratively as in Line 4 of Algorithm \ref{alg:atbs}. This is stated in the next proposition. 

\begin{algorithm}[t]
\caption{Augmented Threshold based Strategy (ATBS)}
\begin{algorithmic}[1]
\STATE $\mathsf{V}_0=\mathsf{V}$
\FOR {$t=1$ to $s$ with step-size $1$ }
    \STATE $\mathsf{V}_t=\phi$
    \FOR {$j: \mathcal{V}_j \in \mathsf{V}_{t-1} $}
        \IF { Equations \eqref{Eq:1}, \eqref{Eq:2}, and \eqref{Eq:3} are satisfied}
            \STATE $ \mathsf{V}_{t} = \mathsf{V}_{t} \cup \{\mathcal{V}_j\}$
        \ENDIF
    \ENDFOR
    \STATE $Q_{ATBS,t}\big(R^{m\times t}\big)=\argmax_{\mathcal{V}_j\in\mathsf{V}_{t}} ~S\big(\mathcal{V}_j,R_{t,j}\big)$ 

\ENDFOR
  \end{algorithmic}
  \label{alg:atbs}
\end{algorithm} 
\begin{Proposition}
\label{Prop:Complexity}
The feasible virtual user set can be derived iteratively since $\mathsf{V}_{t+1}\subset \mathsf{V}_{t}, t\in [s-1]$.
\end{Proposition}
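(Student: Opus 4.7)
The natural route is to prove the contrapositive: if a virtual user $\mathcal{V}_j$ fails to belong to $\mathsf{V}_t$, then it also fails to belong to $\mathsf{V}_{t+1}$. Let $c_{i,t}\triangleq (t-1)A^Q_{i,t-1}$, the number of times user $u_i$ has been activated up to time $t-1$. The key structural fact I would isolate up front is that, regardless of which feasible virtual user the ATBS picks at time $t$,
\begin{align*}
c_{i,t+1}-c_{i,t}=\mathbbm{1}_{\{u_i\in Q_{ATBS,t}(R^{m\times t})\}}\in\{0,1\},\qquad \sum_{i\in[n]}(c_{i,t+1}-c_{i,t})=|Q_{ATBS,t}(R^{m\times t})|\leq N_{max}.
\end{align*}
Equipped with this, I would split into three cases according to which of \eqref{Eq:1}, \eqref{Eq:2}, \eqref{Eq:3} is violated at time $t$, and in each case show the corresponding inequality is also violated at time $t+1$.

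The upper-bound constraint \eqref{Eq:2} is immediate: $c_{i,t+1}\geq c_{i,t}$ implies that if $\floor{s\overline{w}_i}<c_{i,t}+\mathbbm{1}_{\{u_i\in\mathcal{V}_j\}}$ for some $i$, the same inequality remains strict after replacing $c_{i,t}$ by $c_{i,t+1}$. For the lower-bound constraint \eqref{Eq:1}, if the max is attained at some index $i^\ast$, then $s-t<\ceil{s\underline{w}_{i^\ast}}-c_{i^\ast,t}-\mathbbm{1}_{\{u_{i^\ast}\in\mathcal{V}_j\}}$; since both sides are integers this gives a gap of at least one, and subtracting one from each side yields $s-(t+1)<\ceil{s\underline{w}_{i^\ast}}-(c_{i^\ast,t}+1)-\mathbbm{1}_{\{u_{i^\ast}\in\mathcal{V}_j\}}\leq \ceil{s\underline{w}_{i^\ast}}-c_{i^\ast,t+1}-\mathbbm{1}_{\{u_{i^\ast}\in\mathcal{V}_j\}}$, so \eqref{Eq:1} fails at $t+1$ as well.

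The remaining case \eqref{Eq:3} is where the positive-part operator makes things slightly more delicate. I would first verify the elementary bound $(x-c)^+\geq x^+-c$ for any real $x$ and $c\in\{0,1\}$, so that writing $x_i\triangleq \ceil{s\underline{w}_i}-c_{i,t}-\mathbbm{1}_{\{u_i\in\mathcal{V}_j\}}$ and $c_i\triangleq c_{i,t+1}-c_{i,t}$ gives
\begin{align*}
\sum_{i\in[n]}\bigl(\ceil{s\underline{w}_i}-c_{i,t+1}-\mathbbm{1}_{\{u_i\in\mathcal{V}_j\}}\bigr)^+=\sum_{i\in[n]}(x_i-c_i)^+\geq \sum_{i\in[n]}x_i^+-\sum_{i\in[n]}c_i\geq \sum_{i\in[n]}x_i^+-N_{max}.
\end{align*}
Combining with the failure of \eqref{Eq:3} at time $t$, namely $\sum_i x_i^+>(s-t)N_{max}$, and again invoking integrality of both sides, yields $\sum_i(x_i-c_i)^+>(s-t-1)N_{max}$, which is exactly the negation of \eqref{Eq:3} at time $t+1$.

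The main obstacle is the last case: one must show that the $N_{max}$ worth of slack that could be lost in decreasing $(s-t)N_{max}$ to $(s-t-1)N_{max}$ is always matched by a loss of at most $N_{max}$ in the positive-part sum. The identity $\sum_i (c_{i,t+1}-c_{i,t})\leq N_{max}$ together with the $(x-c)^+\geq x^+-c$ inequality is precisely what supplies this matching, and pairing it with the integrality gap closes the argument. The other two cases are routine once the increment structure of $c_{i,t}$ is written down.
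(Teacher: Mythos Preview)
Your proof is correct and follows the same contrapositive approach as the paper: show that if any of \eqref{Eq:1}--\eqref{Eq:3} is violated for $\mathcal{V}_j$ at time $t$, it remains violated at time $t+1$, using that each activation count increments by at most one. The paper only spells out the case of \eqref{Eq:1} and waves off the others as analogous; your explicit treatment of \eqref{Eq:3} via $(x-c)^+\geq x^+-c$ together with $\sum_i(c_{i,t+1}-c_{i,t})\leq N_{max}$ is precisely the extra ingredient that case needs, so your write-up is in fact more complete than the paper's (and the integrality remarks, while true, are not actually required for either case).
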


\begin{proof}
 We prove the proposition for inequality \eqref{Eq:1}. The proof for the other two inequalities follows by the same arguments. Assume that inequality \eqref{Eq:1} is violated at time $t-1$:
\begin{align*}
&s-t+1>  \max_{i\in [n]} \left(\ceil{s\underline{w}_i}-(t-2)A^Q_{i,t-2}- \mathbbm{1}(u_i\in \mathcal{V}_j)\right)
\end{align*}
We show that the inequality is also violated at time $t$ regardless of the value of $Q_{ATBS,t}$:
\begin{align*}
&s-t+1-1<  \max_{i\in [n]} \left(\ceil{s\underline{w}_i}-((t-2)A^Q_{i,t-2}+1)- \mathbbm{1}(u_i\in \mathcal{V}_j)\right)\\
&\leq \max_{i\in [n]} \left(\ceil{s\underline{w}_i}-((t-1)A^Q_{i,t-1})- \mathbbm{1}(u_i\in \mathcal{V}_j)\right).
\end{align*}
This completes the proof.
\end{proof}

Next we argue that ATBSs satisfy the short-term temporal constraints. We make use of the following propositions to prove this claim.

\begin{Proposition} \label{Prop:Feas1}
For the setup $(n,N_{max}, \underline{w}^n, \overline{w}^n, f_{R^m})$ and a feasible  window-length $s$, let $Q$ be a strategy satisfying the short term fairness constraints. Then, $Q_t(R^{m\times t})\in \mathsf{V}_1, t\in [s]$. 
\end{Proposition}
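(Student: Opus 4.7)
The plan is to prove the contrapositive: for any virtual user $\mathcal{V}_j\notin \mathsf{V}_1$, no fairness-satisfying strategy $Q$ can ever select $\mathcal{V}_j$ at any time-slot $t\in [s]$. Since $\mathsf{V}_1$ is defined at $t=1$ with zero prior history ($A^Q_{i,0}=0$), the complement $\mathcal{V}_j\notin\mathsf{V}_1$ means at least one of \eqref{Eq:1}, \eqref{Eq:2}, \eqref{Eq:3} is violated at $t=1$. I will split into three cases accordingly and in each case derive a contradiction with the short-term fairness constraints in \eqref{Def:tem_fair}.

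For the case where \eqref{Eq:2} fails, there must exist $i$ with $\floor{s\overline{w}_i}<\mathbbm{1}_{\{u_i\in\mathcal{V}_j\}}$, which forces $u_i\in\mathcal{V}_j$ and $s\overline{w}_i<1$. Then a single activation of $\mathcal{V}_j$ at any time-slot produces $sA^Q_{i,s}\geq 1>s\overline{w}_i$, violating the upper temporal bound. For the case where \eqref{Eq:1} fails, I will first rule out $u_i\in\mathcal{V}_j$ (which would give the impossible $\ceil{s\underline{w}_i}>s$ since $\underline{w}_i\le 1$), so $u_i\notin\mathcal{V}_j$ and $\ceil{s\underline{w}_i}=s$. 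Integrality of $sA^Q_{i,s}$ and the lower-bound constraint then force $sA^Q_{i,s}=s$, i.e., $u_i$ appears in every activated virtual user under $Q$; since $u_i\notin\mathcal{V}_j$, $Q$ can never pick $\mathcal{V}_j$.

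The third case, where \eqref{Eq:3} fails at $t=1$, is the most substantive and relies on a global counting argument. Suppose for contradiction that $Q$ activates $\mathcal{V}_j$ at some time $t^*\in[s]$. The lower-bound constraint gives $sA^Q_{i,s}\geq \ceil{s\underline{w}_i}$ for each $i$, and the activation at $t^*$ contributes exactly $\mathbbm{1}_{\{u_i\in\mathcal{V}_j\}}$ to the count for user $u_i$. Hence the remaining $s-1$ time-slots must contribute at least $(\ceil{s\underline{w}_i}-\mathbbm{1}_{\{u_i\in\mathcal{V}_j\}})^+$ activations of $u_i$. Summing over $i$ and using the cap of at most $N_{max}$ users per slot, the remaining slots provide at most $(s-1)N_{max}$ total user-slot activations, yielding
\begin{align*}
\sum_{i\in[n]}\bigl(\ceil{s\underline{w}_i}-\mathbbm{1}_{\{u_i\in\mathcal{V}_j\}}\bigr)^+\leq (s-1)N_{max},
\end{align*}
which directly contradicts the assumed violation of \eqref{Eq:3}.

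I expect the main obstacle to be bookkeeping rather than conceptual: keeping the ceiling/floor rounding consistent in Cases 1 and 2, and in Case 3 making sure the per-slot cap $N_{max}$ is applied to the $s-1$ slots distinct from $t^*$ (so that the contribution of $\mathcal{V}_j$'s activation is counted exactly once and separately from the $(s-1)N_{max}$ budget). Once these are handled carefully, the three cases together yield $\mathcal{V}_j\notin\mathsf{V}_1 \Rightarrow Q_t(R^{m\times t})\neq \mathcal{V}_j$ for all $t\in [s]$, which is the desired claim.
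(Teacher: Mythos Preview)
Your argument is correct and follows the same contradiction-by-cases approach as the paper: assume $Q_t(R^{m\times t})=\mathcal{V}_j\notin\mathsf{V}_1$, branch on which of \eqref{Eq:1}--\eqref{Eq:3} fails at $t=1$, and show the fairness constraint \eqref{Def:tem_fair} is then violated. The paper only spells out the case where \eqref{Eq:1} fails (using the counting ``even if $u_i$ is activated in all other $s-1$ slots, $sA^Q_{i,s}\leq (s-1)+\mathbbm{1}_{\{u_i\in\mathcal{V}_j\}}<\lceil s\underline{w}_i\rceil$'') and dismisses the remaining cases as similar; your write-up is more complete, in particular your Case~3 supplies the global $(s-1)N_{\max}$ budget argument that the paper leaves implicit.
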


\begin{proof}
We prove the proposition by contradiction. Assume that $Q_t(R^{m\times t})\notin \mathsf{V}_1$. Then at least one of the Equations \eqref{Eq:1}-\eqref{Eq:3} is violated.
We continue the proof based on the assumption that Equation \eqref{Eq:1} is violated. The proof when the other cases follows by similar arguments. Assume that
\begin{align*}
    \exists i\in [n]: s-1< \ceil{s\underline{w}_i}- \mathbbm{1}_{\{u_i\in \mathcal{V}_j\}},
\end{align*}
where $\mathcal{V}_j= Q_t(R^{m\times t})$. Then, the temporal share of user $i$ is less than  $ \frac{1}{s}\ceil{s\underline{w}_i}$ even if it is activated in all other time-slots since  $\frac{1}{s}(\ceil{s\underline{w}_i}-1)< \underline{w}_i$.  
This contradicts the assumption that the scheduler satisfies the lower temporal demands for user $i$.
\end{proof}


\begin{Proposition}
\label{prop:sat2}
For the scheduling setup $(n,N_{max}, \underline{w}^n, \overline{w}^n, f_{R^m})$ assume that the window-length $s$ is feasible.  Then, 
\begin{align*}
    Q_{ATBS,t}(R^{m\times t})= \mathcal{V}_j
    \Rightarrow s-t\in \mathcal{S}(\underline{{w}'}^n,\overline{{w'}}^n,\mathsf{V}_{t+1})
\end{align*}
where 
\begin{align*}
    &\underline{w}'_i=\left(\ceil{s\underline{w}_i}- (t-1)A_{i,t-1}^Q- \mathbbm{1}_{\{u_i\in \mathcal{V}_j\}}\right)^+,\quad i\in [n],\\
    &\overline{w}'_i=\floor{s\overline{w}_i}- (t-1)A_{i,t-1}^Q- \mathbbm{1}_{\{u_i\in \mathcal{V}_j\}},\quad i\in [n].
\end{align*}
\end{Proposition}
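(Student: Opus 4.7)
The plan is to establish $s-t\in \mathcal{S}(\underline{{w}'}^n,\overline{{w'}}^n,\mathsf{V}_{t+1})$ by exhibiting an ordered round robin (ORR) strategy for the reduced scheduling problem (window-length $s-t$, updated temporal demands $(\underline{w}'^n,\overline{w}'^n)$, virtual user set $\mathsf{V}_{t+1}$). By Proposition 1 applied to the reduced setup, the existence of such an ORR strategy implies feasibility of the remaining window.

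First, I would unpack the hypothesis. Since the ATBS selects $\mathcal{V}_j$ at time $t$, Definition \ref{Def:ATBS} gives $\mathcal{V}_j\in \mathsf{V}_t$ and hence inequalities \eqref{Eq:1}--\eqref{Eq:3} hold for this choice. Substituting the definitions of $\underline{w}'_i$ and $\overline{w}'_i$, these three conditions become
\begin{align*}
\text{(a)}\quad & s-t\geq \max_{i\in [n]}\underline{w}'_i,\\
\text{(b)}\quad & \overline{w}'_i\geq 0,\quad \forall i\in [n],\\
\text{(c)}\quad & (s-t)N_{max}\geq \sum_{i\in [n]}\underline{w}'_i.
\end{align*}
I would also verify $\underline{w}'_i\leq \overline{w}'_i$: feasibility of $s$ for the original problem furnishes an integer $k_i\in [s\underline{w}_i,s\overline{w}_i]$ for each $i$, so $\ceil{s\underline{w}_i}\leq \floor{s\overline{w}_i}$, and subtracting the common quantity $(t-1)A^Q_{i,t-1}+\mathbbm{1}_{\{u_i\in \mathcal{V}_j\}}$ from both sides yields the desired inequality.

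Second, I would construct the schedule explicitly by setting per-user target activation counts $l^{\star}_i=\underline{w}'_i$. Conditions (a) and (c) give $\max_i l^{\star}_i\leq s-t$ and $\sum_i l^{\star}_i\leq (s-t)N_{max}$, respectively, so the counts can be packed into $s-t$ slots with at most $N_{max}$ users per slot by a greedy construction analogous to the one used in the proof of Proposition \ref{Prop:Eq-feasibility}. This yields a virtual user sequence $(\mathcal{V}_{j_1},\ldots,\mathcal{V}_{j_{s-t}})$ with $|\mathcal{V}_{j_k}|\leq N_{max}$, in which each user $i$ appears exactly $l^{\star}_i$ times; since $l^{\star}_i\leq \overline{w}'_i$, the upper demands are also met, and ordering these virtual users according to Definition \ref{Def:RR} produces the required ORR.

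Finally, I would argue the constructed virtual users lie in $\mathsf{V}_{t+1}$. The three inequalities defining $\mathsf{V}_{t+1}$ at time $t+1$ of the original problem coincide, after the substitutions above, with the inequalities defining the ``time-$1$'' feasible set of the reduced problem. Proposition \ref{Prop:Feas1}, applied to the reduced problem, then ensures the first virtual user of the constructed ORR lies in $\mathsf{V}_{t+1}$, and iterating through time keeps the full schedule inside the successive feasible sets. The main obstacle I anticipate is the packing step: verifying that the counts $\underline{w}'^n$ decompose into a valid virtual user sequence while respecting both lower and upper demands. This is a transportation-feasibility question, resolved by conditions (a)--(c) together with $\underline{w}'_i\leq \overline{w}'_i$ via a greedy construction that at each slot activates the $\min(N_{max},\#\{i:\text{ remaining count }>0\})$ users with the largest residual demand.
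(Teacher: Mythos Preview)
Your approach is essentially the same as the paper's: from $\mathcal{V}_j\in\mathsf{V}_t$ you extract conditions (a)--(c), deduce that the reduced problem with parameters $(s-t,\underline{w}'^n,\overline{w}'^n,N_{max})$ is feasible via an explicit ORR, and then invoke Proposition~\ref{Prop:Feas1} on the reduced problem to conclude that the ORR outputs lie in $\mathsf{V}_{t+1}$. The paper does exactly this, only more tersely---it asserts $s-t\in\mathcal{S}(\underline{w}'^n,\overline{w}'^n,N_{max})$ directly from (a)--(c) rather than spelling out a packing construction, and it notes that Proposition~\ref{Prop:Feas1} already gives $\{Q_{ORR,t'}:t'\in[s-t]\}\subseteq\mathsf{V}_{t+1}$ in one shot (all outputs of a feasible strategy lie in $\mathsf{V}_1$, not just the first), so your ``iterating through time'' in the last paragraph is unnecessary.
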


\begin{proof}
 Note that by definition ${\underline{w}'_i}\leq {\overline{w}'_i}$.
If $Q_{ATBS,t}(R^{m\times t})=\mathcal{V}_j$, then $\mathcal{V}_j\in \mathsf{V}_t$. So,  
\begin{align*}
&s-t\geq  \max_{i\in [n]} (\underline{w}'_i), \qquad \overline{w}'_i\geq 0, i\in [n],
\qquad (s-t)N_{max} \geq  \sum_{i\in [n]}\underline{w}'_i, t\in [s-1].
\end{align*}
Consequently, $s-t\in\mathcal{S}({\underline{w}'}^n,\overline{{w'}}^n,N_{max}) $. So, there exists an ordered round robin strategy $Q_{ORR}$ satisfying the short term fairness constraints $({\underline{w}'}^n,{\overline{w}'}^n)$ with window-length $s-t$. By Proposition \ref{Prop:Feas1}, we have $\{Q_{ORR,t'}|t'\in [s-t]\}\subseteq \mathsf{V}_{t+1}$. Consequently, $s-t\in \mathcal{S}({\underline{w}'}^n,\overline{{w'}}^n,\mathsf{V}_{t+1}).$
\end{proof}
The following theorem shows that ATBSs satisfy the short term fairness constraints for any feasible window-length $s$.  

\begin{Theorem}
For the scheduling setup $(n,N_{max}, \underline{w}^n, \overline{w}^n, f_{R^m})$ assume that the window-length $s$ is feasible. Then, the class of ATBSs described in Definition \ref{Def:ATBS} satisfy the short-term temporal demand constraints.  
\end{Theorem}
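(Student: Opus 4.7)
The plan is to prove the theorem by induction on $t \in [s]$, showing that at each time-slot the feasible virtual user set $\mathsf{V}_t$ used by the ATBS is non-empty, and then invoking Remark \ref{Rem:sat} to conclude that the constraints \eqref{Eq:1}--\eqref{Eq:3} at $t=s$ coincide with the temporal demand constraints of Definition \ref{Def:Strategy}. The inductive invariant I would carry forward is that, after the ATBS has scheduled the first $t-1$ slots, the residual scheduling problem with reduced window-length $s-(t-1)$ and updated demand vectors $(\underline{w}'^n,\overline{w}'^n)$ (defined as in Proposition \ref{prop:sat2}) is still feasible.

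For the base case $t=1$, feasibility of $s$ is given by hypothesis. Proposition \ref{Prop:Feas1} then says that the first action of any feasibility-preserving strategy must lie in $\mathsf{V}_1$; in particular $\mathsf{V}_1 \neq \emptyset$, so the $\argmax$ in the ATBS definition is well defined at $t=1$. For the inductive step, suppose the residual problem is feasible at the start of slot $t$ and $\mathsf{V}_t \neq \emptyset$. The ATBS selects some $\mathcal{V}_j \in \mathsf{V}_t$. Proposition \ref{prop:sat2} then guarantees that the residual window-length $s-t$ is feasible with respect to the newly updated demand vectors $(\underline{w}'^n,\overline{w}'^n)$ over the restricted virtual user family $\mathsf{V}_{t+1}$. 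Applying Proposition \ref{Prop:Feas1} to this residual problem shows that any feasibility-preserving first action there must lie in $\mathsf{V}_{t+1}$, so $\mathsf{V}_{t+1} \neq \emptyset$, closing the induction.

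Finally, at $t = s$ the ATBS has produced a complete schedule $Q$, and by construction every activation obeyed \eqref{Eq:1}--\eqref{Eq:3}. Using Remark \ref{Rem:sat}, which states that at $t=s$ those three inequalities reduce exactly to $\underline{w}_i \le A^Q_{i,s} \le \overline{w}_i$ for all $i \in [n]$, we conclude that the ATBS satisfies the short-term temporal constraints of Definition \ref{Def:Strategy} with probability one (the argument is entirely deterministic in the temporal shares, independent of the realization of $R^{m\times s}$).

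The main obstacle is ensuring that the inductive step is tight: one must verify that the residual bounds $\underline{w}'^n,\overline{w}'^n$ produced by activating $\mathcal{V}_j \in \mathsf{V}_t$ are themselves consistent (i.e. $\underline{w}'_i \le \overline{w}'_i$ and $\sum_i \underline{w}'_i \le (s-t)N_{max}$), which is precisely what the three defining inequalities of $\mathsf{V}_t$ encode; this is handled inside Proposition \ref{prop:sat2}. Once that bookkeeping is granted, the rest of the argument is a clean induction, with Propositions \ref{Prop:Feas1} and \ref{prop:sat2} playing complementary roles — the former transfers feasibility of the problem into non-emptiness of the candidate set, and the latter transfers a choice from that set into feasibility of the next residual problem.
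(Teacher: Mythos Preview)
Your proposal is correct and follows essentially the same approach as the paper's own (outline of) proof: induction on $t$ to show $\mathsf{V}_t\neq\emptyset$, using feasibility of $s$ for the base case, Proposition~\ref{prop:sat2} for the inductive step, and Remark~\ref{Rem:sat} at $t=s$ to conclude. Your argument is slightly more explicit in that you also invoke Proposition~\ref{Prop:Feas1} at each stage, but note that the conclusion $s-t\in\mathcal{S}(\underline{w}'^n,\overline{w}'^n,\mathsf{V}_{t+1})$ of Proposition~\ref{prop:sat2} already directly yields $\mathsf{V}_{t+1}\neq\emptyset$, so that extra citation is redundant (though not incorrect).
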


\textit{Outline of proof.} From Remark \ref{Rem:sat}, in order for the ATBS to satisfy the short-term temporal constraints, we must have $P(\mathsf{V}_{s}\neq \phi)=1$. Note that by the feasibility assumption, we must have $\mathsf{V}_1\neq \phi$. So, $Q_{ATBS,1}$ is well-defined. 
Then, by Proposition \ref{prop:sat2}, we must have $\mathsf{V}_2\neq \phi$. Similarly, by induction, we must have $\mathsf{V}_s\neq \phi$. 

In \cite{arxiv}, it was shown that TBSs achieve optimal average system utility under long-term temporal fairness constraints. The following theorem builds upon this to show that ATBSs achieve optimal average system utility as the window-length is increased asymptotically and to
provide rates of convergence to the optimal long-term utility.

\begin{Theorem} \label{Th:Conv}
For the scheduling setup $(n,N_{max}, \underline{w}^n, \overline{w}^n, f_{R^m})$ let ${\lambda^*}^n$ be the threshold vector which achieves optimal performance under long-term fairness constraints\footnote{The existence of ${\lambda^*}^n$ is shown in \cite{arxiv}.}. Then, 
\begin{align*}
    1- \frac{m}{4s\epsilon^2}
    \leq\lim_{s\to \infty} \frac{U_{ATBS}(s,{\lambda^*}^n)}{U_{TBS}}
    \leq
    \lim_{s\to \infty} \frac{U^*_s}{U_{TBS}}
    \leq 1,
\end{align*}
where $U_{TBS}$ is the optimal expected utility under long-term fairness constraints and ${w^*}^n$ is the corresponding temporal demand vector, $U_{ATBS}(s,{\lambda^*}^n)$ is the expected utility due to $Q_{ATBS}(s,{\lambda^*}^n)$, $\epsilon= \min(\epsilon_1,\epsilon_2)$, $\epsilon_1= \min_{i\in [n]}(w^*_i-\underline{w}_i,\overline{w}_i-w^*_i)$, $\epsilon_2=N_{max}-\sum_{i\in[n]}{w^*_i}$.  
\end{Theorem}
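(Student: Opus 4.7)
The right-most inequality $\lim_{s\to\infty} U^*_s / U_{TBS} \leq 1$ is immediate from Corollary \ref{Cor:Sup}. The middle inequality $U_{ATBS}(s,{\lambda^*}^n) \leq U^*_s$ holds by definition of $U^*_s$ as the maximum over all $s$-schedulers, together with the previous theorem showing that ATBSs lie in $\mathcal{Q}_s$. So the substantive work is proving the left-most inequality: that the expected utility of $Q_{ATBS}(s,{\lambda^*}^n)$ is within $m/(4s\epsilon^2)$ of the long-term optimum.

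The approach I would take is a coupling argument between the restricted ATBS and the unrestricted TBS. Let $Q_{TBS}({\lambda^*}^n)$ be the threshold-based strategy that picks $\argmax_{\mathcal{V}_j \in \mathsf{V}} M(\mathcal{V}_j, R_{t,j})$ without enforcing conditions \eqref{Eq:1}--\eqref{Eq:3}; drive both $Q_{ATBS}$ and $Q_{TBS}$ with the same realization of $(R^m_t)_{t\in[s]}$. Under $Q_{TBS}$ the per-slot choices are i.i.d.\ with $P(Q_{TBS,t} = \mathcal{V}_j) = a^*_j$, and the corresponding marginal user-activation probabilities satisfy $w^*_i = \sum_{j: u_i \in \mathcal{V}_j} a^*_j$, which lie strictly in the interior of the temporal-demand region with coordinate slack at least $\epsilon_1$ and sum slack at least $\epsilon_2$. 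The key observation is that whenever the TBS trajectory never triggers the restrictions \eqref{Eq:1}--\eqref{Eq:3} over $[1,s]$, the processes $Q_{ATBS}$ and $Q_{TBS}$ coincide and the two utilities agree exactly.

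I would then introduce the ``typical event''
\begin{equation*}
T = \Big\{ \max_{j\in[m]}\,\max_{t\in[s]} \Big| \tfrac{1}{s}\sum_{k=1}^{t}\mathbbm{1}_{\{Q_{TBS,k} = \mathcal{V}_j\}} - \tfrac{t}{s} a^*_j \Big| \leq \epsilon \Big\},
\end{equation*}
and argue that on $T$ the partial temporal shares $(t-1)A^{Q_{TBS}}_{i,t-1}$ remain within the slack governed by $\epsilon_1,\epsilon_2$ at every intermediate $t$, so that conditions \eqref{Eq:1}--\eqref{Eq:3} are satisfied throughout $[1,s]$ and hence $Q_{ATBS} = Q_{TBS}$ on $T$. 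Chebyshev's inequality applied to $N_j = \sum_{k=1}^{s}\mathbbm{1}_{\{Q_{TBS,k} = \mathcal{V}_j\}}$, a sum of i.i.d.\ Bernoulli variables with variance at most $1/4$, gives $P(|N_j/s - a^*_j| > \epsilon) \leq 1/(4s\epsilon^2)$ for each $j$; a union bound over the $m$ virtual users yields $P(T^c) \leq m/(4s\epsilon^2)$. Combining, $E[U^{Q_{ATBS}}_s] \geq P(T)\, U_{TBS} + E[U^{Q_{ATBS}}_s \mathbbm{1}_{T^c}]$, and since the second term is nonnegative and $U_{TBS}$ cancels upon normalization, dividing through yields the claimed lower bound, which in the $s \to \infty$ limit forces the ratio to $1$ and sandwiches it with the upper bound.

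The main obstacle, and where I would need to be careful, is promoting concentration of the final count $N_j$ to simultaneous control of all the partial counts $\{\sum_{k\leq t}\mathbbm{1}_{\{Q_{TBS,k}=\mathcal{V}_j\}}\}_{t\in[s]}$ needed to verify \eqref{Eq:1}--\eqref{Eq:3} at every intermediate step (not just at $t=s$). The cleanest route is Doob's (or Kolmogorov's) maximal inequality applied to the zero-mean martingale $\sum_{k\leq t}(\mathbbm{1}_{\{Q_{TBS,k}=\mathcal{V}_j\}} - a^*_j)$, which preserves the $1/(4s\epsilon^2)$ order and keeps the union bound at $m/(4s\epsilon^2)$. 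A secondary bookkeeping step is ensuring that on $T^c$ the residual expected utility is nonnegative or at most an additive constant times $P(T^c)$, so that it is absorbed into the same asymptotic rate; this follows from the boundedness in expectation of $R^m$ implicit in the paper's formulation, together with Proposition~\ref{prop:sat2} guaranteeing that ATBS still produces feasible (and hence well-defined) activations on $T^c$.
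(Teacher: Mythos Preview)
Your approach is essentially the same as the paper's: couple $Q_{ATBS}$ to the unrestricted $Q_{TBS}$ on the same realization, use that the TBS selections are i.i.d.\ with law $a^*_j$, bound the probability of deviation by Chebyshev with the variance bound $1/4$ and a union bound over the $m$ virtual users, and drop the residual on the bad event using nonnegativity of the per-slot utilities. The paper packages this slightly differently: it introduces the stopping time $A=\min\{t:\mathsf{V}_t\neq\mathsf{V}\}$, applies Wald's identity to the i.i.d.\ prefix $M^A$ to get $U_{ATBS}\geq \tfrac{1}{s}\mathbb{E}(A)\,U_{TBS}$, and then bounds $\mathbb{E}(A)\geq s\,\Pr(A=s)\geq s\bigl(1-\tfrac{m}{4s\epsilon^2}\bigr)$ via the typical-set estimate from \cite{csiszar2011information}. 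Your direct decomposition $\mathbb{E}[U]\geq P(T)U_{TBS}$ and the paper's Wald route are two phrasings of the same inequality.

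Where you are actually more careful than the paper is the point you flag as the ``main obstacle'': controlling all intermediate partial counts, not just the terminal ones, so that \eqref{Eq:1}--\eqref{Eq:3} hold for every $t\in[s]$ (equivalently, so that $A>s$). The paper asserts in one line that terminal typicality $M^s\in\mathcal{A}^s_\epsilon$ implies $\mathsf{V}_s=\mathsf{V}$ and does not spell out the uniform-in-$t$ argument. Your proposal to use Kolmogorov's maximal inequality on the centered indicator martingale is the natural way to make this rigorous while preserving the $m/(4s\epsilon^2)$ rate, and it plugs directly into either packaging (yours or the paper's stopping-time version). So your plan is correct and, on this particular step, tighter than what the paper writes down.
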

\begin{proof}
Let $M_{t}, t\leq [s]$ be the index of the virtual user which is selected by the ATBS strategy at time $t$. Let $A$ be the stopping time for which $\mathsf{V}_{A-1}=\mathsf{V}\neq \mathsf{V}_A$.  As mentioned in \cite{liu-jsac}, the sequence $M^A$ consists of independently and identically  distributed components with $Pr(M_t=j)=a_j, j\in [m], t\in [A]$, where $a^m= \Theta({w^*}^n)$, and $\Theta(\cdot)$ is defined in Definition \ref{Def:Mapping}. For a given natural number $s'\leq A$, we define the $\epsilon$-typical set of sequences as:
\begin{align*}
    \mathcal{A}^{s'}_\epsilon(M)=\left\{M^{s'} \Big ||\frac{1}{s'}\sum_{i\in [s']}\mathbbm{1}_{\{M_j=j\}}-a_j|\leq \epsilon, \forall j\in [m]\right\}.
\end{align*}
From \cite{csiszar2011information}, we have $Pr(\mathcal{A}^{s'}_\epsilon(M))\geq 1- \frac{m}{4s'\epsilon^2}$. Note that if $\epsilon$ is taken as in the theorem statement, and if $M^s\in \mathcal{A}^{s}_\epsilon(M)$, then $\mathsf{V}_{s}=\mathsf{V}$. On the other hand, from Wald's identity \cite{janssen2006applied} we have 
\begin{align}
\label{Eq:Low}
U^*_s&\geq U_{ATBS}(s,{\lambda^*}^n)=\mathbb{E}\left(\frac{1}{s}\sum^s_{k=1}\sum_{j=1}^m R_{j,k}\mathbbm{1}_{\big\{Q_{ATBS,k}(R^{m\times k})=\mathcal{V}_{j}\big\}}\right)
\\&\geq\mathbb{E}\left(\frac{1}{s}\sum^A_{k=1}\sum_{j=1}^m R_{j,k}\mathbbm{1}_{\big\{Q_{ATBS,k}(R^{m\times k})=\mathcal{V}_{j}\big\}}\right)
=\frac{1}{s}\mathbb{E}(A)\mathbb{E}\left(\sum_{j=1}^m R_{j,1}\mathbbm{1}_{\big\{Q_{TBS,1}(R^{m})=\mathcal{V}_{j}\big\}}\right)= \frac{1}{s}\mathbb{E}(A) U^*,   \nonumber
\end{align}
where we have used the fact that $U^*=U_{TBS}$ when ${\lambda^*}^n$ is used as the threshold vector. Consequently,
\begin{align*}
    U^*_s\geq \frac{1}{s}\mathbb{E}(A) U^*
    \geq Pr(A=s) U^*
    \geq  (1- \frac{m}{4s\epsilon^2})U^*.
\end{align*}
\end{proof}

\begin{figure}[h]
 \centering \includegraphics[width=0.60\linewidth, draft=false]{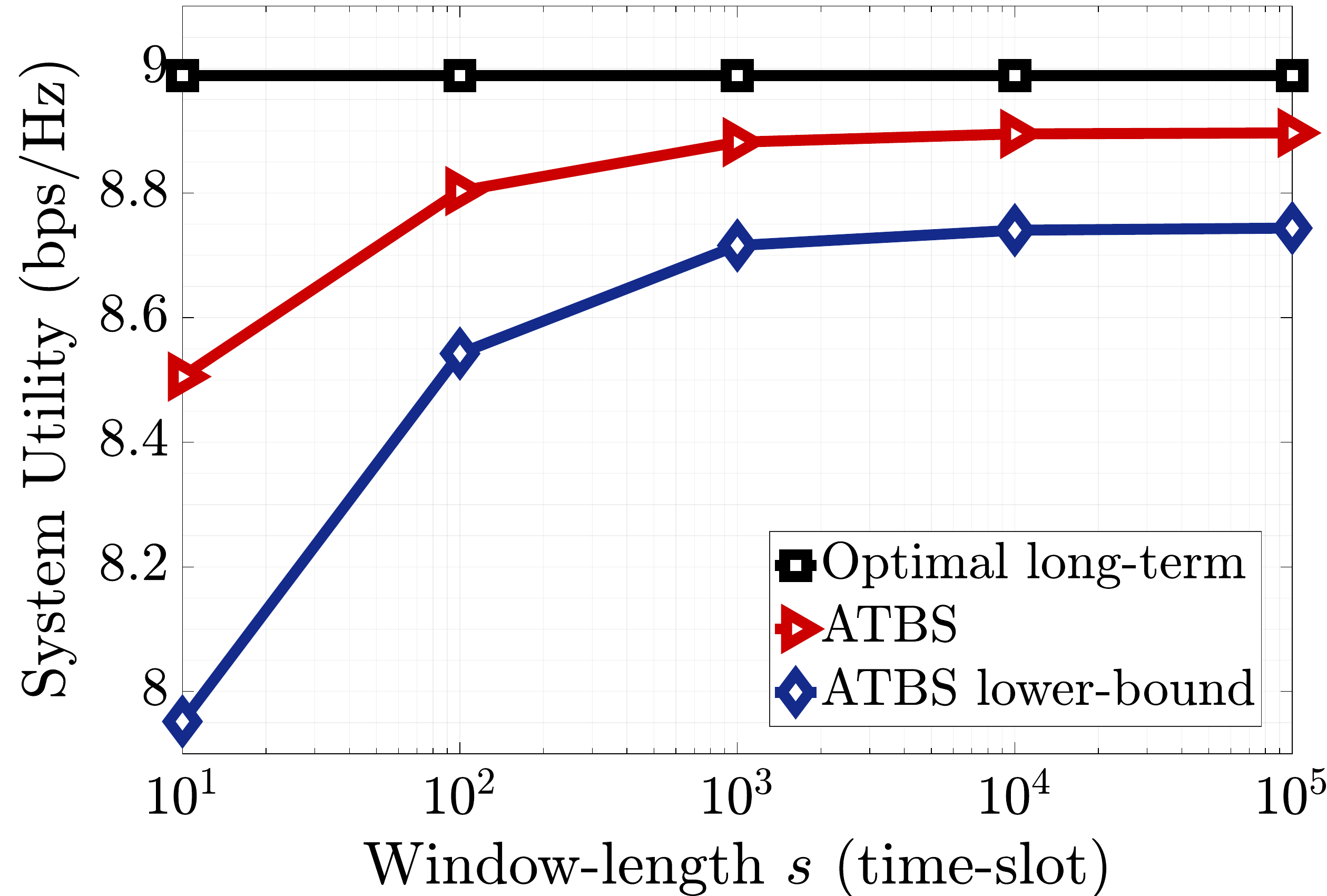}
  \caption{System utility as a function of window-length $s$ for the optimal threshold based strategy (TBS) and proposed augmented threshold based strategy (ATBS) using the same thresholds as the optimal TBS.}
 \label{fig:utility}
 \vspace{-15pt}
\end{figure} 

\section{Simulation Results}
\label{Sec:Sim}
In this section, we provide simulations of practical scenarios to validate the results presented in the previous sections and evaluate the performance short-term fair scheduling strategy proposed in Section \ref{Sec:Scheme}. We consider the downlink of a single-cell wireless system consisting of a BS and five users distributed uniformly at random in a ring around the BS with inner and outer radii of 20 m and 100 m, respectively. We consider a practical propagation channel model including path loss, shadowing, and Rayleigh fading as well as a truncated Shannon rate model provided in Table I of \cite{arxiv}. The performance value of the virtual users are calculated as in \cite{arxiv} where the users perform successive interference cancellation to decode their signals and achieve maximum symmetric BC sum-rate. We assume that at most two users can be activated at each time-slot, i.e.  $N_{max}=2$, and $\underline{w}_i=0.2, \overline{w}_i=1, \forall i\in [5]$, i.e., each user is active for at least one-fifth of the time-slots. Furthermore, we consider fairness window-lengths $s\in\{10,10^2,10^3,10^4,10^5\}$. From Theorem \ref{thm:feas-ineq}, it can be shown that these window-lengths are feasible.


Figure \ref{fig:utility} shows the average system utility as a function of fairness window-length $s$. From \cite{arxiv}, it is known that optimal system utility under long-term temporal constraints is achieved by TBSs. By Corollary \ref{Cor:Sup}, the  optimal utility $U^*$ provides an upper-bound for the utility of ATBS. We observe that the utility of ATBS, using the same thresholds as the optimal TBS, approaches to the optimal utility as window-length increases, confirming Theorem \ref{Th:Conv}. Furthermore, we can see that the gap between the utility of ATBS and optimal utility is small even for relatively small window-length such as $s=100$.  Figure \ref{fig:utility} also plots the simulated value of the lower-bound provided in the right hand side of Equation \eqref{Eq:Low}.


\section{Conclusion}
\label{Sec:Conc}
We have investigated multi-user scheduling under short-term temporal fairness constraints. We have characterized the set of feasible window lengths as a function of system parameters, and we have shown that the optimal system utility is non-monotonic and super-additive in window-length. We have proposed a scheduling strategy which satisfies short-term fairness constraints for arbitrary feasible window-lengths. The utility due to the proposed strategy approaches long-term optimal utility as the window-length is increased asymptotically.  Numerical simulations are provided in various practical scenarios to verify the theoretical results. An avenue of future research is to consider scheduling for multi-cell wireless networks under short-term and long-term fairness constraints.

\bibliographystyle{IEEEtran}
\bibliography{reference}

\end{document}